\documentclass[conference,letterpaper]{IEEEtran}

\usepackage[utf8]{inputenc} 
\usepackage[T1]{fontenc}
\usepackage{url}
\usepackage{ifthen}
\usepackage{cite}
\usepackage{amssymb, amsfonts}
\usepackage[cmex10]{amsmath}
\usepackage{bm}
\usepackage{graphicx} 
\usepackage{float}   
\usepackage{multirow} 
\usepackage{booktabs} 
\usepackage{xcolor}
\usepackage{mathtools}
\usepackage{geometry}
\usepackage{epstopdf} 
\newtheorem{theorem}{\textbf{Theorem}}
\newtheorem{lemma}[theorem]{\textbf{Lemma}}

\newenvironment{proof}
  {\par\noindent\hspace{1em}\itshape Proof:\ \normalfont}
  {\par\nobreak\hfill$\square$\par}

\makeatletter

\newcommand{\Rmnum}[1]{\expandafter\@slowromancap\romannumeral #1@}
\makeatother
\begin{document}
\newgeometry{left=0.64in,right=0.625in,top=0.73in,bottom=1.05in}
\title{Successive Cancellation List Decoding of Extended Reed-Solomon Codes} 
\author{
	\IEEEauthorblockN{
    Xiaoqian Ye\IEEEauthorrefmark{2},
		Jingyu Lin\IEEEauthorrefmark{2}\IEEEauthorrefmark{3},
    Junjie Huang\IEEEauthorrefmark{4},
		Li Chen\IEEEauthorrefmark{2}\IEEEauthorrefmark{3},
    and Chang-An Zhao\IEEEauthorrefmark{4}\\
		\IEEEauthorblockA{\IEEEauthorrefmark{2}
			School of Electronics and Information Technology, Sun Yat-sen University, Guangzhou 510006, P. R. China
		}
    \IEEEauthorblockA{\IEEEauthorrefmark{3}
      Guangdong Province Key Laboratory of Information Security Technology, Guangzhou 510006, P. R. China
    }
    \IEEEauthorblockA{\IEEEauthorrefmark{4}
			School of Mathematics, Sun Yat-sen University, Guangzhou 510275, P. R. China
		}
    \IEEEauthorblockA{Email: \{yexq26, linjy228, huangjj76\}@mail2.sysu.edu.cn, \{chenli55, zhaochan3\}@mail.sysu.edu.cn}
}}
\maketitle
\begin{abstract}
THIS PAPER IS ELIGIBLE FOR THE STUDENT PAPER AWARD.
Reed-Solomon (RS) codes are an important class of non-binary error-correction codes. They are particularly competent in correcting burst errors, being widely applied in modern communications and data storage systems. This also thanks to their distance property of reaching the Singleton bound, being the maximum distance separable (MDS) codes. This paper proposes a new list decoding for extended RS (eRS) codes defined over a finite field of characteristic two, i.e., $\mathbb{F}_{2^n}$. It is developed based on transforming an eRS code into $n$ binary polar codes. Consequently, it can be decoded by the successive cancellation (SC) decoding and further their list decoding, i.e., the SCL decoding. A pre-transformed matrix is required for re-interpretating the eRS codes, which also determines their SC and SCL decoding performances. Its column linear independence property is studied, leading to theoretical characterization of their SC decoding performance. Our proposed decoding and analysis are validated numerically. 
\end{abstract}

\begin{IEEEkeywords}
Extended Reed-Solomon codes, lower bound, linear independence, successive cancellation list decoding.
\end{IEEEkeywords}

\section{Introduction}
Reed-Solomon (RS) codes are an important class of non-binary error-correction codes, which are particularly competent in correcting burst errors. They have been widely applied in modern communications and data storage systems. Algebraic decoding of RS codes can be categorized into the industralized syndrome based decoding and the more recent curve-fitting based decoding. The celebrated Berlekamp-Massey (BM) algorithm \cite{BM_algorithm}\cite{shift_register_BCH} is a classic example of the former, which can correct errors up to half the code's minimum Hamming distance. For an $(N, K)$ RS code, where $N$ and $K$ are the length and dimension of the code, respectively, the BM algorithm exhibits a decoding complexity of $\mathcal{O}(N^2)$. By further utilizing soft received information for identifying $\gamma$ unreliable received symbols to be flipped, the Chase-BM decoding\cite{chase} yields improved performances by executing $2^\gamma$ decoding events, exhibiting a complexity of $\mathcal{O}(2^\gamma N^2)$. The curve-fitting based decoding is substantiated by the  Guruswami-Sudan (GS) algorithm\cite{GS_algorithm}, which is also a list decoding algorithm in nature. It can correct errors beyond the half distance bound, especially for low rate codes. With an interpolation multiplicity $m$ and a maximum decoding output list size $L$, the GS algorithm exhibits a decoding complexity of $\mathcal{O}(Lm^4N^2)$. Further performance improvement can be achieved by the K\"{o}tter-Vardy (KV) algorithm\cite{KV_algorithm} that transforms the reliability information into interpolation multiplicities. 
It exhibits a decoding complexity of $\mathcal{O}(N^2L^5)$\cite{KV_compliexity}. It can be seen that the existing soft-decision RS decoding approaches remain complex. A more efficient soft decoding approach for RS codes is yet to be developed.

Meanwhile, modern codes have advanced immensely in terms of achieving Shannon capacity, which can be demonstrated by the success of polar codes\cite{arikan2009polar}. Aiming to increase the minimum Hamming distance of polar codes, it has been shown that they can be constructed with dynamic frozen bits by the use of parity-check matrix of an extended BCH (eBCH) code \cite{trifonov_dynamic}. Inversely, a recent research of \cite{binary_transformation} has shown that a binary linear block code (B-LBC) can be interpreted as a polar code with dynamic frozen bits. Consequently, the B-LBC can be decoded by the successive cancellation (SC) decoding algorithm\cite{arikan2009polar} or its performance improving SC list (SCL) decoding algorithm\cite{SCL_decoding}. For this transformation, distribution of the B-LBC information bits over the polar code paradigm is essential for determining the soft decoding performance. This is determined by the permutation matrix that is required for the transform. 

Extending the above soft decoding approaches into handling non-binary linear block codes (NB-LBCs), it was first shown in \cite{peter_SC_4_RS_codes} that RS codes can be similarly transformed into non-binary polar codes with dynamic frozen symbols. Subsequently, SC decoding and sequential decoding of RS codes were proposed. With a maximum list size of $L$, the sequential decoding has a complexity of $\mathcal{O}(LN^2\log_2^2N)$. Alternatively, a more recent research of \cite{SCL_NonBinary} showed that an NB-LBC defined over a finite field of characteristic two, i.e., $\mathbb{F}_{2^n}$ can be transformed into $n$ polar codes, where $n$ is the order of the field. Subsequently, SC and SCL decoding can also be performed for the $n$ binary polar codes, constituting a soft decoding of the NB-LBC.

This paper focuses on SC and SCL decoding of extended RS (eRS) codes and their performance analysis. An eRS code defined over $\mathbb{F}_{2^n}$ can be decoded by the SC and SCL decoding of $n$ binary polar codes. This reinterpretation is bridged by a pre-transformed matrix. In order to characterize the theoretical SC decoding performance, linear independence property of columns in the pre-transformed matrix is studied. It has been aware that performance of SC and SCL decoding partly depends on the richness of \textit{apriori} information, which can be represented by the number of frozen symbols before estimating an information symbol. Hence, our analysis unveils that limited by the design of the pre-transformed matrix, the SC and SCL decoding performances degrade as the eRS codeword length increases. Our simulation results validate the analysis and show the proposed SCL decoding outperforms both the KV and Chase-BM decoding for length-32 eRS codes. 

\textbf{Notation:} Let $\mathbb{F}_{2^n}=\left\{\sigma_0, \sigma_1, \cdots, \sigma_{2^n-1}\right\}$ denote a finite field with characteristic two and order $n$, where $\sigma_0$ designates the zero element. Let $\alpha$ and $p(X)$ denote its primitive element and primitive polynomial, respectively. Consequently, given $U\in\mathbb{F}_{2^{n}}$, it can be represented as $\sum_{j=0}^{n-1}u_j\alpha^{j}$, where $u_j\in\mathbb{F}_2$. Vector $(u_0, u_1, \cdots, u_{n-1})$ is the binary composition of $U$. In this paper, it is also let $U\left[j\right]=u_j$, where $j=0, 1, \cdots, n-1$.

\section{Preliminaries}
This section introduces the foundations of eRS code and polar code.
\subsection{eRS Codes}
Let $\mathcal{C}$ denote an ($N=2^n, K$) eRS code defined over $\mathbb{F}_{2^n}$. Given a message 
$\bm{F}=(F_0, F_1, \cdots, F_{K-1}) \in \mathbb{F}_{2^n}^K$
, it can be presented as
\begin{equation}
  F(x)=F_0+F_1x+\cdots+F_{K-1}x^{K-1}.
\end{equation}
Codeword $\bm{C}$ can be generated by 
\begin{equation}
  \bm{C}=(F(\sigma_1), F(\sigma_2), \cdots, F(\sigma_{2^n-1}), F(\sigma_0)), \label{assignment_coding}
\end{equation}
where $\bm{C}\in\mathbb{F}_{2^n}^{N}$ and locators $\sigma_0, \sigma_1, \cdots, \sigma_{2^n-1}$ are the $N$ distinct elements of $\mathbb{F}_{2^n}$. Note that an RS code defined over $\mathbb{F}_{2^n}$ is often encoded with a length of $2^n-1$, for which $\sigma_0$ is not considered as a locator. In this work, it is considered as $C_{N-1}=F(\sigma_0)$ so that the codeword length reaches $2^n$, legalizing their transformation into polar codes that should have a length in the orders of two. Meanwhile, since $F(\sigma_0)=\sum_{i=0}^{N-2}C_i$, it is also a parity symbol. 

Let $\bm{{\rm G}}$ denote the generator matrix of $\mathcal{C}$, and $\bm{{\rm G}}\in\mathbb{F}_{2^n}^{K\times N}$. Codewords of $\mathcal{C}$ can also be generated by $\bm{C}=\bm{F}\bm{{\rm G}}$. Since RS and eRS codes are maximum distance separable (MDS) codes, any $K$ columns of $\bm{{\rm G}}$ are linearly independent\cite{eRS_MDS_book}\cite{eRS_MDS}.

\subsection{Polar Codes} 
Let us consider a binary polar code also of length $N$ and dimension $K$. An input vector $\bm{u}=(u_0, u_1, \cdots, u_{N-1})\in\mathbb{F}_2^{N}$ consists of $K$ information bits and $N-K$ redundant bits which are also called frozen bits. Based on channel polarization theorem\cite{arikan2009polar}, the $K$ most reliable subchannels are chosen to transmit information bits, while the remaining $N-K$ subchannels are utilized to transmit the frozen bits. The process of categorizing the subchannels is called rate profiling. Generator matrix of the polar code can be defined as $\bm{{\rm G}}_{\rm p} = \bm{{\rm F}}^{\otimes n}$,
where $\mathbf{F}=\left[(1, 0), (1, 1)\right]^{\rm T}$ is the Ar\i kan kernel and $\otimes n$ denotes the $n$-fold Kronecker product. Note that $\bm{{\rm G}}_{\rm p}\in\mathbb{F}_2^{N\times N}$. The polar codeword $\bm{c}$ can be generated by 
\begin{equation}
  \bm{c} = \bm{u}\bm{{\rm G}}_{\rm p}.
\end{equation} 

\section{SCL Decoding of eRS Codes}
This section introduces the transformation from an $(N, K)$ eRS code to $n$ binary polar codes, and its SC and SCL decoding. The SCL decoding complexity is also characterized for eRS codes.
\subsection{Transformation from eRS Codes to Polar Codes}
For an $(N, K)$ eRS code $\mathcal{C}$, the binary composition of any codeword $\bm{C} \in \mathcal{C}$ can be represented as the concatenation of $n$ permuted binary polar codewords\cite{SCL_NonBinary}. Let $\bm{{\rm P}}\in \mathbb{F}_{2}^{N\times N}$ denote a permutation matrix. The eRS codebook $\mathcal{C}$ can be defined as
\begin{subequations}
  \begin{align}
    \mathcal{C} &\triangleq \{\bm{C}=\bm{F}\bm{{\rm G}}\,|\, \forall\bm{F}\in \mathbb{F}_{2^n}^{K}\} \label{eq:a}\\
    &=\{\bm{C}=\bm{F}(\bm{{\rm G}}\bm{{\rm P}}^{-1}\bm{{\rm G}}_{\rm p}^{-1})\bm{{\rm G}}_{\rm p}\bm{{\rm P}}\,| \,\forall\bm{F}\in \mathbb{F}_{2^n}^{K}\} \label{eq:b}\\
    &=\{\bm{C}=\bm{U}\bm{{\rm G}}_{\rm p}\bm{{\rm P}}\,|\,\bm{U}=\bm{F}\bm{{\rm G}}\bm{{\rm P}}^{-1}\bm{{\rm G}}_{\rm p}^{-1}, \text{and } \forall\,\bm{F}\in\mathbb{F}_{2^n}^{K}\},\label{eq:c}
  \end{align}
\end{subequations}
where $\bm{U}=(U_0, U_1, \cdots, U_{N-1})\in\mathbb{F}_{2^n}^N$ can be seen as the non-binary input vector to a polar encoder.
Let 
\begin{equation}
  \begin{aligned}
    \hspace{-0.2em}\bm{u} = (u_{0, 0}, u_{0, 1}, \cdots&, u_{0, n-1}, u_{1, 0}, u_{1, 1}, \cdots, u_{1, n-1}, \\
    & \cdots, u_{N-1, 0}, u_{N-1, 1}, \cdots, u_{N-1, n-1}) \label{all_element_u}
  \end{aligned}
\end{equation}
denote the binary composition of $\bm{U}$, where $\bm{u} \in \mathbb{F}_{2}^{N\cdot n}$. Entries of $\bm{u}$ can be partitioned into $n$ groups, each of which consists of $N$ bits, i.e., 
\begin{equation}
  \bm{u}_j = (u_{0, j}, u_{1, j}, \cdots, u_{N-1, j}), \label{part_ele_u}
\end{equation}
for $j=0, 1, \cdots, n-1$. Correspondingly, let
\begin{equation}
  \begin{aligned}\label{codeword_composition}
    \bm{c} = (c_{0, 0}, c_{0, 1}, \cdots&, c_{0, n-1}, c_{1, 0}, c_{1, 1}, \cdots, c_{1, n-1}, \\
    & \cdots, c_{N-1, 0}, c_{N-1, 1}, \cdots, c_{N-1, n-1})
  \end{aligned}
\end{equation}
denote the binary composition of $\bm{C}$, and $\bm{c}\in \mathbb{F}_{2}^{N\cdot n}$. Similarly, $\bm{c}$ can also be partitioned into $n$ groups as
\begin{equation}
  \bm{c}_j = (c_{0, j}, c_{1, j}, \cdots, c_{N-1, j}),
\end{equation}
for $j=0, 1, \cdots, n-1$. Since $\bm{{\rm G}}_{\rm p}$, $\bm{{\rm P}}\in\mathbb{F}_2^{N\times N}$, the encoding operation in \eqref{eq:c} implies it can be performed over $\mathbb{F}_2$. In particular, $\bm{c}_j$ is a permuted binary polar codeword of $\bm{u}_j$, i.e., 
\begin{equation}
  \bm{c}_j=\bm{u}_j\bm{{\rm G}}_{\rm p}\bm{{\rm P}}. \label{composition_encoding}
\end{equation}
Therefore, the binary composition of an eRS codeword $\bm{C}$ can be interpreted as a concatenation of $n$ permuted binary polar codewords. 
To determine the distribution of information and frozen symbols, Gaussian elimination (GE) is further performed on $\bm{{\rm G}}\bm{{\rm P}}^{-1}\bm{{\rm G}}_{\rm p}^{-1}$. Based on the fact that $\bm{{\rm G}}_{\rm p}=\bm{{\rm G}}_{\rm p}^{-1}$, the non-binary pre-transformed matrix
\begin{equation}
  \mathbf{M}=\mathbf{E}\bm{{\rm G}}\bm{{\rm P}}^{-1}\bm{{\rm G}}_{\rm p} \label{pre-transformed_matrix}
\end{equation}
can be obtained, where $\mathbf{E}\in \mathbb{F}_{2^n}^{K\times K}$ is a full rank row elimination matrix. Matrix $\mathbf{M}\in\mathbb{F}_{2^n}^{K\times N}$ is in a row reduced echelon form. Hence, based on \eqref{eq:b} and \eqref{pre-transformed_matrix}, the eRS codebook can be rewritten as 
\begin{equation}\label{transformation_with_E}
    \mathcal{C} \triangleq \{\bm{C}=\bm{F}'\mathbf{M}\bm{{\rm G}}_{\rm p}\bm{{\rm P}}\,|\, \bm{F}'=\bm{F}\bm{{\rm E}}^{-1}, \text{and } \forall\,\bm{F}\in \mathbb{F}_{2^n}^{K}\},
\end{equation}
and the non-binary input vector can also be represented as 
\begin{equation}
  \bm{U}=\bm{F}'\mathbf{M}, \label{non_binary_input_vector}
\end{equation}
where $\bm{F}'$ is the transformed message. 
Fig. \ref{encoding} summarizes the transformation from an eRS codeword into $n$ permuted binary polar codewords. 

Pre-transformed matrix $\mathbf{M}$ implies the distribution of information and frozen symbols in $\bm{U}$. Note that the binary polar input $\bm{u}_0, \bm{u}_1, \cdots, \bm{u}_{n-1}$ share an identical distribution. Let $\mathcal{A}$ denote the index set of the pivot columns of $\mathbf{M}$, which is also the index set of the information symbols in $\bm{U}$. Hence, $\mathcal{A}^{c}=\{0, 1,\cdots, N-1\}\setminus\mathcal{A}$ denotes the index set of the frozen symbols in $\bm{U}$. Subsequently, we define $\mathbf{M}^{\mathcal{A}}$ and $\mathbf{M}^{\mathcal{A}^c}$ as the submatrices of $\mathbf{M}$, which are constituted by columns indexed in $\mathcal{A}$ and $\mathcal{A}^{c}$, respectively. Since matrix $\bm{{\rm M}}$ is in a row reduced echelon form, the frozen symbol with index $i\in\mathcal{A}^{c}$ is a linear combination of information symbols with indices smaller than $i$, i.e., 
\begin{equation}
  U_i=\sum_{t=0}^{\tau_i}F'_t\cdot \mathbf{M}_{t, i}, \label{dynamic_frozen_symbol}
\end{equation}
where $\tau_i = |\mathcal{A}\cap \{0, 1, \cdots, i-1\}|$. Based on the pre-transformed matrix, the frozen symbols of $\bm{U}$ can be further categorized into the \textit{static} ones and the \textit{dynamic} ones. They correspond to the all-zero columns and the remaining columns in $\mathbf{M}^{\mathcal{A}^{c}}$, respectively. Therefore, permutation matrix $\bm{{\rm P}}$ can adjust the pivot columns of $\mathbf{M}$, and further influences the distribution of information symbols over $\bm{U}$. 

\begin{figure}[h]
  \vspace{-4mm}
	\centerline{\includegraphics[scale=0.61]{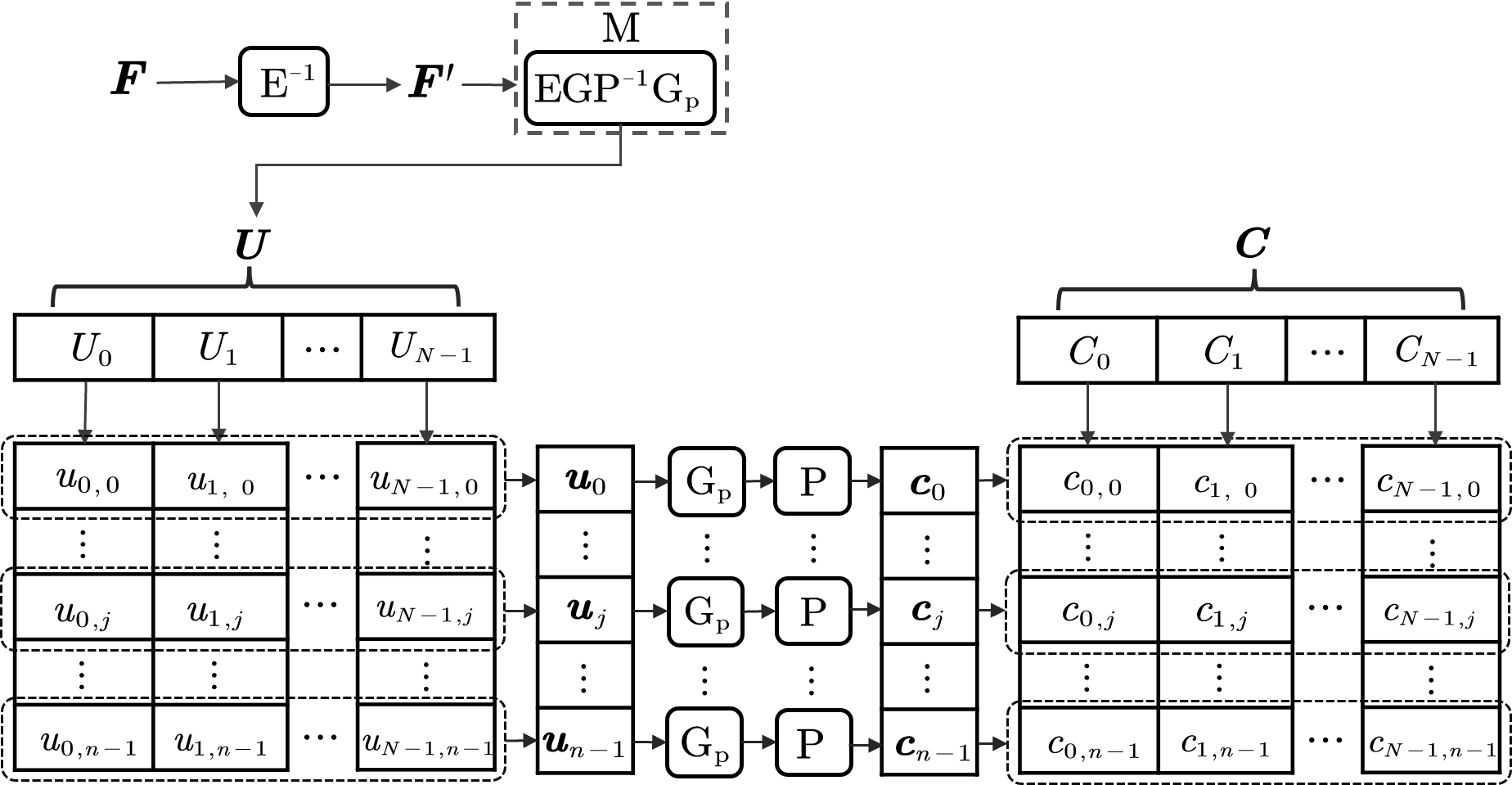}}
	\vspace{-2mm}
	\caption{Transformation from an eRS codeword into $n$ permuted binary polar codewords.}
	\vspace{-2mm}
	\label{encoding}
\end{figure}

\subsection{SC and SCL Decoding}

Assume that an eRS codeword $\bm{C}=(C_0, C_1, \cdots, C_{N-1})\in\mathbb{F}_{2^n}^{N}$ is transmitted over a memoryless channel and $\bm{y}=(y_0, y_1, \cdots, y_{N-1})\in\mathbb{R}^N$ is the channel output. Let $\bm{C}'=\bm{C}\bm{{\rm P}}^{-1}$, $\bm{y}'=\bm{y}\bm{{\rm P}}^{-1}$, and similar to \eqref{codeword_composition}, let $\bm{c}'$ denote the binary composition of $\bm{C}'$.
Let $\bm{\mathcal{L}}$ denote the log-likelihood ratio (LLR) vector of the de-permuted channel output $\bm{y}'$, i.e., 
\begin{equation}
  \begin{aligned}
    \hspace{-0.4em}\bm{\mathcal{L}} = (\mathcal{L}_{0, 0}, \mathcal{L}_{0, 1},&\cdots, \mathcal{L}_{0, n-1}, \mathcal{L}_{1, 0}, \mathcal{L}_{1, 1}, \cdots, \mathcal{L}_{1, n-1}, \\
    & \cdots, \mathcal{L}_{N-1, 0}, \mathcal{L}_{N-1, 1}, \cdots, \mathcal{L}_{N-1, n-1}).\label{received_LLRs}
  \end{aligned}
\end{equation}
Similar to \eqref{all_element_u}-\eqref{part_ele_u}, it can be partitioned into $n$ groups, each of which forms the input of an SC decoder, i.e., for $j=0, 1, \cdots, n-1$,
\begin{equation}
  \bm{\mathcal{L}}_j = (\mathcal{L}_{0, j}, \mathcal{L}_{1, j}, \cdots, \mathcal{L}_{N-1, j}).\label{partition_LLRs} 
\end{equation}
Entries of \eqref{partition_LLRs} are defined as 
\begin{equation}
  \mathcal{L}_{i, j} = \ln\frac{p(y_{i}'|c_{i, j}'=0)}{p(y_{i}'|c_{i, j}'=1)}.
\end{equation}
Note that $p(y_{i}'|c_{i, j}')$ is the channel transition probability, and $c_{i, j}'\in \mathbb{F}_2$. The SC decoding performs a greedy search and estimation of information bits over the binary tree of $n+1$ layers and $N$ leaves. Let $s$ indexes the layers of the tree, where $0\leq s\leq n$. The LLR vector of layer-$s$ can be written as 
\begin{equation}
  \bm{\mathcal{L}}_j^{(s)}=(\mathcal{L}^{(s)}_{0, j}, \mathcal{L}^{(s)}_{1, j}, \cdots, \mathcal{L}^{(s)}_{N-1, j}).
\end{equation}
Note that $\bm{\mathcal{L}}_{j}^{(n)}=\bm{\mathcal{L}}_{j}$. For $0\leq s\leq n-1$, the layer-$s$ LLRs can be computed by \cite{LLRs_update}
\begin{equation}
  \mathcal{L}^{(s)}_{i ,j}= {\rm{sign}}(\mathcal{L} ^{(s+1)}_{i, j}\cdot \mathcal{L}^{(s+1)}_{i+2^s, j})\min(|\mathcal{L}^{(s+1)}_{i, j}|, |\mathcal{L}^{(s+1)}_{i+2^s, j}|), 
\end{equation}
\begin{equation}  
  \mathcal{L}^{(s)}_{i+2^s, j} = (1-2\hat{u}^{(s)}_{i, j})\cdot \mathcal{L}^{(s+1)}_{i, j}+\mathcal{L}^{(s+1)}_{i+2^s, j}. \label{g_function}
\end{equation}
When reaching layer-$0$, estimation of information bits $\hat{u}_{i, j}^{(0)}$ are determined by their LLRs $\mathcal{L}_{i, j}^{(0)}$. In particular, for $i\in \mathcal{A}$,
\begin{equation}
  \hat{u}_{i, j}^{(0)} = \begin{cases}
  0, & \text{if } \mathcal{L}_{i, j}^{(0)} \geq 0; \\
  1, & \text{otherwise}.
  \end{cases}
\end{equation}
For $i\in\mathcal{A}^{c}$,
\begin{equation}
 \hat{u}_{i, j}^{(0)} = (\sum_{t=1}^{\tau_i}\hat{F}'_t\cdot \mathbf{M}_{t, i})\left[j\right].
\end{equation}
Estimation of non-binary input symbols are further determined by 
\begin{equation}
  \hat{U}_i = \begin{cases}
  \sum_{j=0}^{n-1}(\hat{u}^{(0)}_{i, j}\alpha^j), &\text{if } i\in\mathcal{A}, \\
  \sum_{t=0}^{\tau_i}\hat{U}^{\mathcal{A}}_t\cdot\mathbf{M}_{t, i}, &\text{if } i\in\mathcal{A}^c.
  \end{cases}
\end{equation} 
When obtaining $\hat{u}_{i, j}^{(0)}$, for $1\leq s\leq n$, the layer-$s$ estimations can be determined by 
\begin{equation}
  \begin{aligned}
    &\hat{u}_{i, j}^{(s)} = \hat{u}^{(s-1)}_{i+2^{(s-1)}, j} \oplus \hat{u}^{(s-1)}_{i, j}, \\
    &\hat{u}_{i+2^{(s-1)}, j}^{(s)} = \hat{u}_{i+2^{(s-1)}, j}^{(s-1)}. \label{bits_update}
  \end{aligned}
\end{equation}
Therefore, SC decoding of an eRS code can be performed by deploying $n$ SC decoders, each of which takes an input of $\bm{\mathcal{L}}_0, \bm{\mathcal{L}}_1, \cdots, \bm{\mathcal{L}}_{n-1}$, respectively. Vector $\hat{\bm{U}}=(\hat{U}_0, \hat{U}_1, \cdots, \hat{U}_{N-1})$ is estimated in a symbol-by-symbol manner.

Unlike the SC decoder which employs hard-decision for all $n$ bits of each information symbol in $\bm{U}$, the SCL decoder inspects both options. For each information symbol, a decoding path is split into $2^n$ paths. Similar to the SCL decoding of polar codes, in order to contain its exponentially increasing decoding complexity, only the $L$ most likely paths are reserved\cite{SCL_decoding}. 

Let $\Phi_{i}\,(l)$ denote a metric of the $l$-th path when decoding the $i$-th symbol. 
Let us assume that there are \textit{L} surviving paths after estimating $\hat{U}_{i-1}$. For $0\leq l\leq L-1$, their path metrics can be computed by 
\begin{equation}
  \Phi_{i-1}(l) = \sum_{i'=0}^{i-1}\sum_{j\in\Lambda(i', l)}|\mathcal{L}^{(0)}_{i', j}(l)|,
\end{equation}
where 
$\Lambda(i', l)=\left\{j\,|\,{\rm sign}(\mathcal{L}^{(0)}_{i', j}(l))\neq 1-2{\hat{u}_{i', j}^{(0)}}\right\}$. 
Note that $\Phi_{i-1}(l)$ measures the likelihood of path-$l$ after estimating $\hat{U}_{i-1}$, where a smaller value indicates the path is more reliable.
If $i\in\mathcal{A}$, each surviving path is split into $2^n$ paths with $\hat{U}_i$ estimated as $0, 1, \alpha, \cdots, \alpha^{2^n-2}$, respectively. The path metrics can be computed by
\begin{equation}
  \Phi_{i}(l, \hat{U}_i) = \Phi_{i-1}(l) + \sum_{j\in\Lambda(i, l)}|\mathcal{L}^{(0)}_{i, j}(l)|.
\end{equation}
Hence, at layer-$0$, $2^nL$ candidate paths are generated. Among them, only the $L$ paths with the smallest metrics are reserved. 
When $i=N-1$, the decoding path with the smallest path metric will be selected as the decoding output. For length-$N$ eRS codes, its SCL decoding that reserves $L$ paths yields a complexity of $\mathcal{O}(nLN\log_2N)$.
 
\section{SC Decoding Performance Analysis}
This section analyzes the SC decoding performance of eRS codes, shedding insights into rationale of the decoding and design of permutation matrix $\bm{{\rm P}}$.
Note that theoretical characterization of the SCL decoding performance for eRS codes remains challenging, as this is also open for polar codes. 

For an $(N, K)$ eRS code, its SC decoding error probability can be determined by\cite{polar_subcodes}
\begin{equation}
  P_{{\rm e}}^{{\rm SC}}=1-\prod_{i\in\mathcal{A}}(1-P_{\rm e}(\mathcal{W}_i))^n, \label{eRS_SCdecoding_theoretical}
\end{equation}
where $P_{\rm e}(\mathcal{W}_i)$ is the error probability of the $i$-th polarized subchannel $\mathcal{W}_i$ of a length-$N$ polar code. Note that $P_{\rm e}(\mathcal{W}_i)$ can be determined by Monte-Carlo simulation or Gaussian approxiation (GA)\cite{GA_error_prob}. 
Based on the above mentioned transformation of eRS codes, we first study the linearly independent columns in the pre-transformed matrix $\mathbf{M}$, which indicates that the information bits distribution of the polar encoders input is restrained by the generator matrix of the eRS code. Further based on the property of channel degradation, lower bound of SC decoding performance can be determined. Let us define $a=\lceil -\log_2 R \rceil$, and $\mathcal{D}=\{2^a-1, 2\cdot 2^a-1, \cdots, 2^n-1\}$, where $R=K/N$ is rate of the eRS code, and $|\mathcal{D}|=2^{n-a}$. Given matrix $\mathbf{M}$, let us denote $\mathbf{M}^{\mathcal{D}}$ as a submatrix consisting of the columns of $\mathbf{M}$ that are indexed by $\mathcal{D}$.

\subsection{Linearly Independent Columns of Pre-transformed Matrix}
In order to determine the information symbols distribution in $\bm{U}$, the pre-transformed matrix $\mathbf{M}$ needs to be analyzed.
\begin{lemma}
Given an $(N, K)$ eRS code for all $\bm{{\rm P}}\in\mathbb{F}_2^{N\times N}$, columns of $\mathbf{M}^{\mathcal{D}}$ are linearly independent. \label{linear independence of M}
\end{lemma}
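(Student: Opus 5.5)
The plan is to reduce the statement to the MDS property of $\bm{{\rm G}}$: any $K$ columns of $\bm{{\rm G}}$ are linearly independent. The reduction uses the explicit combinatorial structure of the columns of $\bm{{\rm G}}_{\rm p}$ indexed by $\mathcal{D}$.

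First, since $2^{-a}\le R$, we have $|\mathcal{D}|=2^{n-a}=N2^{-a}\le K$; this is the only place the definition of $a$ enters. Next, write $\mathbf{M}^{\mathcal{D}}=\mathbf{E}\bm{{\rm G}}\bm{{\rm P}}^{-1}\bm{{\rm G}}_{\rm p}^{\mathcal{D}}$, where $\bm{{\rm G}}_{\rm p}^{\mathcal{D}}$ is the submatrix of columns of $\bm{{\rm G}}_{\rm p}$ indexed by $\mathcal{D}$. Since $\mathbf{E}$ is full rank, it suffices to show that $\bm{{\rm G}}\bm{{\rm P}}^{-1}\bm{{\rm G}}_{\rm p}^{\mathcal{D}}$ has full column rank over $\mathbb{F}_{2^n}$.

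For $\bm{{\rm G}}_{\rm p}=\mathbf{F}^{\otimes n}$ with $\mathbf{F}=[(1,0),(1,1)]^{\rm T}$, the entry in row $r$ and column $i$ equals $1$ iff the binary support of $i$ is contained in that of $r$. Every $i=m2^a-1\in\mathcal{D}$ has its $a$ least significant bits equal to one, and its high $n-a$ bits range over all patterns (those of $m-1$). Hence every nonzero entry of a column indexed by $\mathcal{D}$ lies in a row of
\[
\mathcal{R}=\{r:\ \text{the $a$ least significant bits of } r \text{ are all one}\},\qquad |\mathcal{R}|=2^{n-a}.
\]
Restricted to the rows in $\mathcal{R}$ and the columns in $\mathcal{D}$, $\bm{{\rm G}}_{\rm p}$ is exactly the subset-inclusion (zeta) matrix on the $n-a$ high bits, i.e.\ $\mathbf{F}^{\otimes(n-a)}$. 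This matrix is unitriangular and therefore invertible over $\mathbb{F}_2$, and hence over $\mathbb{F}_{2^n}$. Thus $\bm{{\rm G}}_{\rm p}^{\mathcal{D}}=\mathbf{J}\mathbf{B}$, where $\mathbf{J}\in\mathbb{F}_2^{N\times 2^{n-a}}$ selects the coordinates in $\mathcal{R}$ and $\mathbf{B}=\mathbf{F}^{\otimes(n-a)}$ is invertible.

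It follows that $\bm{{\rm G}}\bm{{\rm P}}^{-1}\bm{{\rm G}}_{\rm p}^{\mathcal{D}}=(\bm{{\rm G}}\bm{{\rm P}}^{-1}\mathbf{J})\mathbf{B}$. Because $\bm{{\rm P}}^{-1}$ is a permutation matrix, $\bm{{\rm G}}\bm{{\rm P}}^{-1}\mathbf{J}$ consists of $2^{n-a}$ distinct columns of $\bm{{\rm G}}$, namely those at the positions that $\bm{{\rm P}}$ associates with $\mathcal{R}$. Since $2^{n-a}\le K$ and $\bm{{\rm G}}$ generates an MDS code, these columns are linearly independent, regardless of $\bm{{\rm P}}$. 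Multiplying on the right by the invertible $\mathbf{B}$ and on the left by the full-rank $\mathbf{E}$ preserves full column rank, which proves the lemma.

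The step requiring the most care is the second one: correctly identifying the column/row convention of $\mathbf{F}^{\otimes n}$ (which of $r$ or $i$ must contain the other), and checking that all nonzero entries of the $\mathcal{D}$-columns are confined to just $2^{n-a}$ rows. If the convention were reversed, I would instead use the complementary index structure, with the low $a$ bits equal to zero. The rest is the MDS fact already stated in the preliminaries.
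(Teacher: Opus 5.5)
Your proof is correct and is essentially the paper's argument. Your row set $\mathcal{R}$ coincides with $\mathcal{D}$, and your selector $\mathbf{J}$ is the paper's $\bm{{\rm I}}_{2^{n-a}}\otimes\bm{e}_{2^a}$. The factorization $\bm{{\rm G}}_{\rm p}^{\mathcal{D}}=\mathbf{J}\mathbf{F}^{\otimes(n-a)}$ that you read off from the bit-support description is the one the paper obtains via the mixed-product rule and $\mathbf{F}^{\otimes a}\bm{e}_{2^a}=\bm{e}_{2^a}$, and it is followed by the same MDS step using $2^{n-a}\le K$.

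Your convention for $\mathbf{F}$ is the one the paper relies on, so the closing hedge is unnecessary. It would not rescue this statement anyway: with the transposed kernel, column $N-1$ of $\bm{{\rm G}}_{\rm p}$ is all-ones. Column $N-1$ of $\mathbf{M}$ would then be $\mathbf{E}$ times the sum of all columns of $\bm{{\rm G}}$, which is zero whenever $K<N$.
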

\begin{proof}
  By defining $\bm{{\rm I}}_{2^{n-a}}$ and $\bm{e}_{2^a}$ as an identity matrix of dimension $2^{n-a}$ and a length-$2^a$ weight-1 binary column vector $\left[0, 0, \cdots, 0, 1\right]^{{\rm T}}$, respectively,
  $\mathbf{M}^\mathcal{D}$ can be represented as 
  \begin{equation}
    \mathbf{M}^{\mathcal{D}}=\mathbf{M}(\bm{{\rm I}}_{2^{n-a}}\otimes \bm{e}_{2^a}),
  \end{equation}
  where again $\otimes$ denotes the Kronecker product. Hence,
  \begin{equation}
    \begin{aligned}
      \rm{rank}(\mathbf{M}^\mathcal{D})&={\rm rank}(\mathbf{M}(\bm{{\rm I}}_{2^{n-a}}\otimes \bm{e}_{2^a}))\\
      &= {\rm rank}(\bm{{\rm EGP}}^{-1}\bm{{\rm G}}_{\rm p}(\bm{{\rm I}}_{2^{n-a}}\otimes \bm{e}_{2^a})).
    \end{aligned}
  \end{equation}
  Since $\bm{{\rm G}}_{\rm p} = \bm{{\rm F}}^{\otimes n}$,
  \begin{equation}
    \begin{aligned}
      \bm{{\rm G}}_{\rm p}(\bm{{\rm I}}_{2^{n-a}} \otimes \bm{e}_{2^a})
      &=(\mathbf{F}^{\otimes {n-a}}\otimes \mathbf{F}^{\otimes a})(\bm{{\rm I}}_{2^{n-a}}\otimes \bm{e}_{2^a}) \\
      &=(\mathbf{F}^{\otimes n-a}\bm{{\rm I}}_{2^{n-a}})\otimes(\mathbf{F}^{\otimes a}\bm{e}_{2^a})\\
      &\stackrel{(\star)}=(\bm{{\rm I}}_{2^{n-a}}\mathbf{F}^{\otimes n-a})\otimes \bm{e}_{2^a}\\
      &=(\bm{{\rm I}}_{2^{n-a}}\otimes \bm{e}_{2^a})\mathbf{F}^{\otimes n-a},
    \end{aligned}
  \end{equation}
  and
  \begin{equation}
    \begin{aligned}
      {\mathrm{rank}}(\mathbf{M}^\mathcal{D})
      &={\mathrm{rank}}(\bm{{\rm EGP}}^{-1}(\bm{{\rm I}}_{2^{n-a}}\otimes \bm{e}_{2^a})\mathbf{F}^{\otimes n-a})\\
      &={\mathrm{rank}}(\bm{{\rm E}}(\bm{{\rm GP}}^{-1})^{\mathcal{D}}\mathbf{F}^{\otimes n-a}). 
    \end{aligned}
  \end{equation}
  The above equality $(\star)$ is elaborated by the fact that the last column of $\mathbf{F}^{\otimes a}$ is $\bm{e}_{2^a}$, and $\mathbf{F}^{\otimes a}\bm{e}_{2^a}=\bm{e}_{2^a}$. Similarly, $(\bm{{\rm GP}}^{-1})^{\mathcal{D}}$ denotes a submatrix of $(\bm{{\rm GP}}^{-1})$, which is formed by columns indexed by $\mathcal{D}$.
  Since both $\bm{{\rm E}}$ and $\mathbf{F}^{\otimes n-a}$ are full rank matrices, 
  \begin{equation}
    \rm{rank}(\mathbf{M}^{\mathcal{D}})=\rm{rank}((\bm{{\rm GP}}^{-1})^{\mathcal{D}}).
  \end{equation}
  Since $a=\lceil -\log_2 R \rceil$, $K\geq 2^{n-a}$. Since eRS codes are MDS codes, any $K$ columns of $\bm{{\rm G}}$ are linearly independent. Consequently,
  \begin{equation}
    \mathrm{rank}(\mathbf{M}^{\mathcal{D}}) = 2^{n-a}.
  \end{equation}  
  Therefore, all columns of $\mathbf{M}^{\mathcal{D}}$ are linearly independent.
\end{proof}

Therefore, in $\mathbf{M}$, columns $2^a-1$, $2\cdot 2^a-1$, $\cdots$, $2^n-1$ are linearly independent. Since the static frozen symbols in $\bm{U}$ correspond to all-zero columns in $\mathbf{M}$, the positions indexed by $\mathcal{D}$ should carry either information symbols or dynamic frozen symbols. 

\subsection{SC Decoding Performance Lower Bound}
Based on the above conclusion, an optimal distribution of information bits in polar paradigm can be derived. Based on the property of channel degradation, we first analyze the error probabilities of the subchannels indexed in $\mathcal{D}$. 
\begin{lemma}
  \cite{channel_degradation} For length-$2^n$ polar codes with $a\leq n$, let $1\leq\theta\leq 2^{n-a}$ and $1\leq\delta\leq 2^a$, channel degradation implies
  \begin{equation}
    P_{\rm e}(\mathcal{W}_{\theta\cdot 2^a-1})\leq P_{\rm e}(\mathcal{W}_{\theta\cdot 2^a-\delta}).
  \end{equation} \label{channel degradation}
\end{lemma}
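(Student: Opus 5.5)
Within the block $\{(\theta-1)2^a,\dots,\theta 2^a-1\}$, the index $\theta\cdot 2^a-1$ has binary expansion whose last $a$ bits are all ones. The index $\theta\cdot 2^a-\delta$, for $1\leq\delta\leq 2^a$, shares the same top $n-a$ bits (those of $\theta-1$) and has arbitrary last $a$ bits. So the plan is to show that a bit-channel whose index is bitwise dominated by another (same high bits, lower bits a subset of ones) is stochastically upgraded by it. I would invoke the standard partial-order result on polar bit-channels. Taking $\mathcal{W}_i$ in the paper's (natural) index order, write the channel $\mathcal{W}_i$ as obtained from $W$ by applying, at each of the $n$ polarization levels, either the "minus" transform $W\mapsto W^-$ (bit $0$) or the "plus" transform $W\mapsto W^+$ (bit $1$). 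This is read off the binary expansion of $i$, in the order consistent with $\bm{{\rm G}}_{\rm p}=\mathbf{F}^{\otimes n}$ and the recursion used in the SC decoder.

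The key steps are as follows. First, recall that $W^-\preceq W\preceq W^+$ in the degradation order. This holds because $W^-$ is a degraded version of $W$ (it is $W\boxast W$), and $W$ is a degraded version of $W^+$ (discard one of the two outputs). Second, recall that both transforms preserve degradation: if $V\preceq V'$ then $V^{\pm}\preceq V'^{\pm}$. Third, compare the two indices. They agree on the top $n-a$ bits, and in the last $a$ bits the index $\theta 2^a-1$ has ones wherever $\theta 2^a-\delta$ has a one, plus possibly more. Convert $\theta 2^a-\delta$ into $\theta 2^a-1$ by flipping its zero bits to one, one at a time. Each flip replaces a "minus" at some level by a "plus". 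By the first fact, the channel at that level is upgraded, and by the second fact the remaining transforms propagate this upgrade. Chaining the flips gives $\mathcal{W}_{\theta 2^a-\delta}\preceq\mathcal{W}_{\theta 2^a-1}$. Finally, the bit-error probability of ML decoding is monotone under degradation, which yields the stated inequality $P_{\rm e}(\mathcal{W}_{\theta\cdot 2^a-1})\leq P_{\rm e}(\mathcal{W}_{\theta\cdot 2^a-\delta})$.

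The main obstacle is the bookkeeping in the first step. One has to verify that, in the indexing convention used here, which levels the last $a$ bits of $i$ control is irrelevant: flipping any single zero bit to one (with all other bits fixed) always yields an upgraded channel, whatever level that bit controls. This is exactly the "single-bit" case of the partial order of Bardet et al. and Schürch. Because the argument only ever flips zeros to ones and never swaps bit positions, no reordering issues arise, and the combination of the first and second facts applied at the level of the flipped bit suffices.
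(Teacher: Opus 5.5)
Your proof is correct and follows essentially the same route as the paper. The paper gives no proof of its own; it simply invokes channel degradation and cites Mori and Tanaka, and your argument spells that out: the last index $\theta\cdot 2^a-1$ of each length-$2^a$ block bitwise dominates every index in the block. Flipping a $0$ (minus) to a $1$ (plus) at any polarization level upgrades the bit-channel, and upgrading lowers the MAP error probability. The one hidden assumption is that your step $W^-\preceq W$ needs the underlying binary-input channel to be symmetric; this holds for the paper's BPSK-over-AWGN setting, is preserved by the transforms, and is the standing assumption of the cited work.
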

\vspace{-1.5em} 

It indicates when the $N$ polarized subchannels are divided into consecutive, non-overlapping blocks of length $2^a$, the last subchannel of each block has the lowest error probability.
\begin{theorem}
  For an $(N, K)$ eRS code, its SC decoding performance is lower bounded by
  \begin{equation}
    P_{\rm e}^{{\rm SC}}\geq 1-\prod_{i\in \mathcal{D}}(1-P_{\rm e}(\mathcal{W}_i))^{n}. \label{lower_bound}
  \end{equation} \label{SC lower bound}
\end{theorem}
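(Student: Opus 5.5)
The plan is to combine Lemma \ref{linear independence of M} with Lemma \ref{channel degradation} through an injective map from $\mathcal{D}$ into $\mathcal{A}$, and then compare the two products in \eqref{eRS_SCdecoding_theoretical} factor by factor.

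The first step is to see how the blocks $B_\theta=\{(\theta-1)2^a,\dots,\theta\cdot 2^a-1\}$, $\theta=1,\dots,2^{n-a}$, meet $\mathcal{A}$. By Lemma \ref{linear independence of M}, the columns of $\mathbf{M}^{\mathcal{D}}$ are linearly independent, so in particular none of them is zero. Since $\mathbf{M}$ is in row reduced echelon form, every nonzero column of index $i$ lies in the span of the pivot columns with indices at most $i$. Take the $2^{n-a}$ indices $\theta\cdot 2^a-1\in\mathcal{D}$ in increasing order. The first $\theta$ of them give $\theta$ linearly independent columns, all lying in the span of the pivot columns with indices at most $\theta\cdot 2^a-1$. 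The dimension of that span equals the number of such pivots, so
\[
|\mathcal{A}\cap\{0,\dots,\theta\cdot 2^a-1\}|\geq\theta
\]
for every $\theta$. This is the key counting inequality.

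The second step turns this prefix condition into a matching. By a Hall-type (or greedy) argument, there is an injective map $\pi:\mathcal{D}\to\mathcal{A}$ with $\pi(\theta\cdot 2^a-1)\leq\theta\cdot 2^a-1$. The greedy version assigns to the $\theta$-th element of $\mathcal{D}$ the $\theta$-th smallest element of $\mathcal{A}$; the prefix inequality guarantees this index is at most $\theta\cdot 2^a-1$.

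The third step converts the matching into error probabilities, and this is the main obstacle. Lemma \ref{channel degradation} only compares $\theta\cdot 2^a-1$ with indices in the same block $B_\theta$, whereas $\pi(\theta\cdot 2^a-1)$ may lie in an earlier block $B_{\theta'}$ with $\theta'<\theta$. To handle this, I would first apply Lemma \ref{channel degradation} inside $B_{\theta'}$, giving $P_{\rm e}(\mathcal{W}_{\theta'\cdot 2^a-1})\leq P_{\rm e}(\mathcal{W}_{\pi(\cdot)})$. It then remains to compare $\theta'\cdot 2^a-1$ with $\theta\cdot 2^a-1$. The indices $\theta\cdot 2^a-1$ are exactly those whose last $a$ bits are all ones, so they form a copy of the length-$2^{n-a}$ polar code obtained by repeatedly choosing the better (``plus'') branch. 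A partial-order/degradation argument on these channels, e.g.\ via the binary-domination ordering among indices $\theta'-1\preceq\theta-1$ or the natural ordering of this subcode, should show that the later channel is no worse. If that fails in general, the fallback is to choose $\pi$ block-wise, matching each block containing an element of $\mathcal{A}$ to its own last index. This requires a finer version of the counting argument: for each $\theta$ the block $B_\theta$ must contain a pivot, which I would try to derive from the independence of the columns $\theta\cdot 2^a-1$ together with the row-echelon structure.

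Once $P_{\rm e}(\mathcal{W}_{\pi(d)})\geq P_{\rm e}(\mathcal{W}_d)$ holds for all $d\in\mathcal{D}$, the bound follows. Every factor $(1-P_{\rm e}(\mathcal{W}_i))^n$ lies in $[0,1]$, so dropping the indices in $\mathcal{A}\setminus\pi(\mathcal{D})$ does not decrease the product:
\[
\prod_{i\in\mathcal{A}}(1-P_{\rm e}(\mathcal{W}_i))^n\leq\prod_{d\in\mathcal{D}}(1-P_{\rm e}(\mathcal{W}_{\pi(d)}))^n\leq\prod_{d\in\mathcal{D}}(1-P_{\rm e}(\mathcal{W}_d))^n .
\]
Substituting this into \eqref{eRS_SCdecoding_theoretical} yields \eqref{lower_bound}.
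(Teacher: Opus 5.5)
\textbf{There is a genuine gap at your Step~3.}

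Your Steps~1 and~2 are correct. The prefix inequality $|\mathcal{A}\cap\{0,\dots,\theta\cdot 2^a-1\}|\geq\theta$ follows from Lemma~1 and the echelon form, and the greedy injection exists. The closing product comparison also handles $K>|\mathcal{D}|$ more cleanly than a case split. The problem is Step~3, and neither repair you suggest works.

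\emph{Monotonicity along $\mathcal{D}$ is false.} The bits of $\theta-1$ fix the first $n-a$ (channel-side) polarization steps of $\mathcal{W}_{\theta\cdot 2^a-1}$. The domination and swap orderings compare $\theta'-1$ with $\theta-1$ only for special pairs; for $n-a\geq 3$, e.g.\ $011$ versus $100$, the indices are incomparable. Concretely, take $N=16$, $a=1$ on the binary erasure channel with erasure probability $1/2$. Then $\mathcal{W}_7$ and $\mathcal{W}_9$ have erasure probabilities $(81/256)^2\approx 0.10$ and $(175/256)^2\approx 0.47$, so the earlier index of $\mathcal{D}$ is far more reliable.

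\emph{The block-wise fallback also fails, because a block need not contain a pivot.} Take $N=8$, $K=4$, $\alpha^3=\alpha+1$. Let $\mathbf{P}$ put the evaluation points $0,1$ at positions $3,7$; $\alpha^2,\alpha^4$ at positions $2,6$; and $\alpha,\alpha^3$ at positions $1,5$. Column $j$ of $\mathbf{G}\mathbf{P}^{-1}\mathbf{G}_{\rm p}$ is the sum of the evaluation vectors at positions $x\succeq j$ (bitwise). Hence column $0$ vanishes and columns $0,\dots,6$ all have first entry $0$. Columns $1,2,3$ equal $(0,0,0,\alpha^4)^{\rm T}$, $(0,\alpha^3,\alpha^6,\alpha^3)^{\rm T}$ and $(0,1,1,1)^{\rm T}$, which are independent. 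So $\mathcal{A}=\{1,2,3,7\}$, and the block $\{4,5\}$ carries no information symbol.

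\emph{No argument using only the prefix inequality and Lemma~2 can succeed.} The prefix inequality admits index sets that violate the bound. For $N=32$, $a=2$, $K=8$, the set $\{3,7,11,14,15,23,27,31\}$ satisfies every prefix inequality. On the same channel, $\mathcal{W}_{14}$ (erasure probability $\approx 0.190$) is more reliable than the replaced $\mathcal{W}_{19}$ ($\approx 0.218$). Therefore $\prod(1-P_{\rm e}(\mathcal{W}_i))^n$ over this set exceeds the product over $\mathcal{D}$. A complete proof would need an extra structural property of $\mathbf{M}$, beyond Lemma~1, that excludes such pivot patterns, or a comparison set other than $\mathcal{D}$.

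The paper does not supply this either. Its Case~I is your Steps~1--2 in informal form: the $\theta$-th information symbol must lie at or before $\theta\cdot 2^a-1$. It then asserts without proof that placing it exactly at $\theta\cdot 2^a-1$ is optimal, which is precisely the cross-block comparison you flagged. Its Case~II only appends the factors indexed by $\mathcal{D}^*$. You have located the weak point correctly, but neither your proposal nor the paper's argument closes it.
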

\vspace{-0.7em} 

\begin{proof}
  The proof can be categorized into two cases, based on rate of the eRS code.
  
  Case \Rmnum{1}: When $a=-\log_2 R\in\mathbb{Z}^{+}$, we have $K=|\mathcal{D}|$. 
  
  Since columns of $\mathbf{M}^{\mathcal{D}}$ are linearly independent, they correspond to the positions of $\bm{U}$ that carry either information symbols or dynamic frozen symbols. Meanwhile, the polarized subchannels indexed by $\mathcal{D}$ have the lowest error probabilities in each length-$2^a$ block. Therefore, the optimal distribution of information bits can be determined as follows. Since the first information symbol is located before the first dynamic frozen symbol, the first information symbol should be at the subchannel indexed by $2^a-1$. For the subchannel indexed by $2\cdot 2^a-1$, if it carries a dynamic frozen symbol, it is a linear combination of the previous information symbols. It implies that the second information symbol should be placed ahead of it to preserve the linear independence property. Therefore, assigning the second information symbol at $2\cdot 2^a-1$ improves the SC decoding performance. This argument can be extended to the remaining indices in $\mathcal{D}$. Hence, when $K$ information symbols are placed in the positions indexed by $\mathcal{D}$, i.e., $K$ information bits occupy these positions in each polar input vector, the SC decoding performance is lower bounded by  
  \begin{equation}
     P_{\rm e}^{{\rm SC}}\geq 1-\prod_{i\in \mathcal{D}}(1-P_{\rm e}(\mathcal{W}_i))^{n}. 
  \end{equation}

  Case \Rmnum{2}: When $-\log_2R\notin\mathbb{Z}^{+}$, we have $K>|\mathcal{D}|$. Let $\mathcal{D}^*\subset\{0, 1, \cdots, N-1\}\setminus\mathcal{D}$ further denote the index set of the polarized subchannels with the lowest error probabilities, where $|\mathcal{D}^*|=K-|\mathcal{D}|$.  We have
  \begin{equation}
    \begin{aligned}
      P_{\rm e}^{\mathrm{SC}}
      &\geq 1-\prod_{i\in\mathcal{D}}(1-P_{\rm e}(\mathcal{W}_i))^n\prod_{i\in\mathcal{D}^*}(1-P_{\rm e}(\mathcal{W}_i))^n\\
      &>1-\prod_{i\in\mathcal{D}}(1-P_{\rm e}(\mathcal{W}_i))^n.
    \end{aligned}
    \vspace{-3mm}
  \end{equation}
\end{proof}

Theorem 3 shows the SC decoding performance is lower bounded by \eqref{lower_bound}, in which the information symbols distribution of $\bm{U}$ is always optimal. 

\section{Numerical study}
This section shows our numerical study on SC and SCL decoding performances for eRS codes. Our simulations are conducted over the additive white Gaussian noise (AWGN) channel with binary phase shift keying (BPSK) modulation. Performance of the Chase-BM and KV algorithms are provided as comparison benchmarks. They are parameterized by $\gamma$ that denotes the number of flipped positions and $L$ that denotes the maximum decoding output list size, respectively. The signal-to-noise ratio (SNR) is defined as $\frac{E_b}{N_0}$, where $E_b$ is the transmitted energy per information bit, and $N_0$ is the noise variance.
\begin{table}[htbp]
  \vspace{-3mm}
  \caption{SC decoding performance lower bounds for eRS codes of $N$ = 16, 32, 64, 128, 256, and $R$ = 0.25, 0.50.}
  \vspace{-2mm}
  \centering
  \renewcommand{\arraystretch}{1.1}
  \begin{tabular}{cccccc}
  \toprule  
  Rate                  & Length & FER & Rate                  & Length & FER \\ 
  \midrule
  \multirow{5}{*}{0.25} & 16     &  $1.70\times 10^{-4} $  & \multirow{5}{*}{0.50} & 16     &   $5.77\times 10^{-5}$  \\
                        & 32     &  $2.20\times 10^{-3}$   &                       & 32     &   $2.72\times 10^{-4}$  \\
                        & 64     &  $2.31\times 10^{-2}$  &                       & 64     &    $1.20\times 10^{-3}$ \\
                        & 128    &   $1.64\times 10^{-1}$  &                       & 128    &   $5.30\times 10^{-3}$  \\
                        & 256    &   $6.09\times 10^{-1}$  &                       & 256    &    $2.26\times 10^{-2}$ \\ 
  \bottomrule  
  \end{tabular} \label{same_rate_table}
  \vspace{-3mm}
\end{table}

Table \ref{same_rate_table} shows the SC decoding performance lower bounds for eRS codes of various lengths and rates, over the AWGN channel with an SNR of 11 dB. For an eRS code of rate 0.50, information bits of each polar input are transmitted over the subchannels indexed by $\left\{1, 3, \cdots, N-3, N-1\right\}$, while for the rate 0.25 eRS codes, they are transmitted over subchannels indexed by $\left\{3, 7, \cdots, N-5, N-1\right\}$. It can be seen that the SC decoding performance degrades as the codeword length increases. In constrast to the behavior of an algebraic decoding that largely depends on the code's minimum Hamming distance, the SC and SCL decoding rely on the richness of \textit{apriori} information available when estimating information symbols.

\begin{figure}[htbp]
  \vspace{-4mm}
  \centerline{\includegraphics[scale=0.41, clip, trim=10mm 0mm 10mm 7mm]{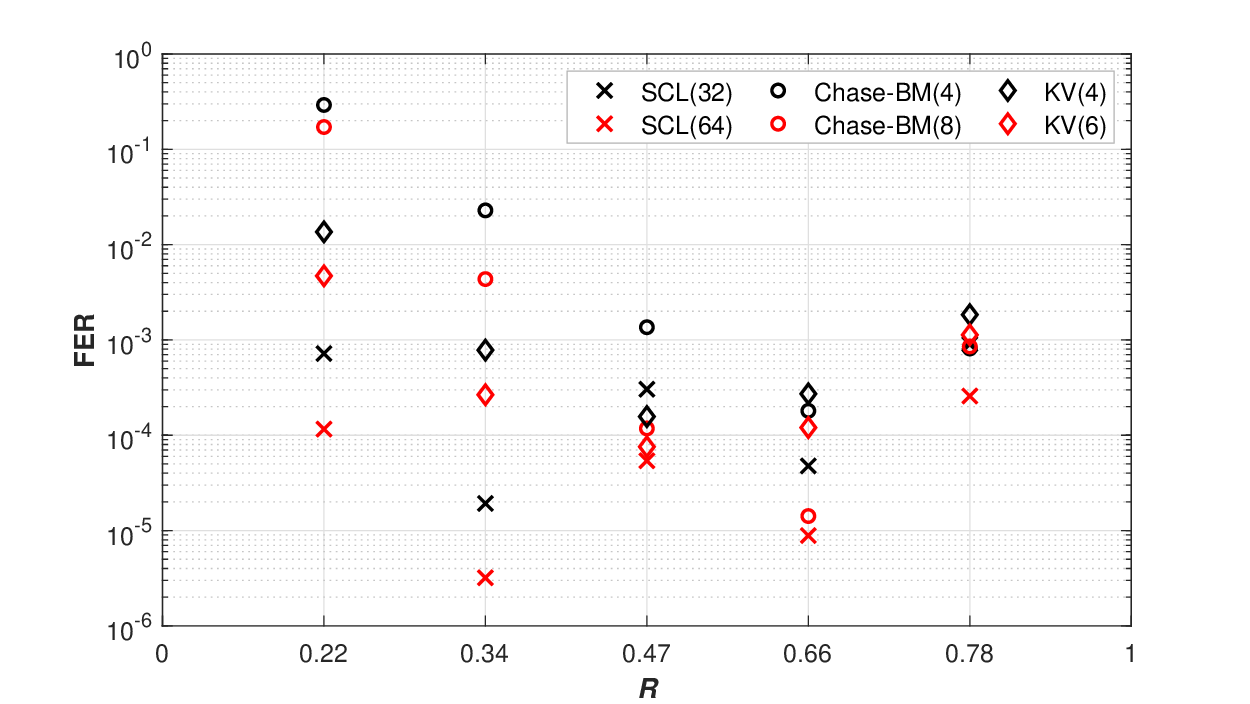}}
	\vspace{-4mm}
	\caption{SCL Decoding Performance of Length-32 eRS Codes.}
	\vspace{-8mm}
	\label{multi_algorithms_performance}
\end{figure}

\begin{table}[htbp]
	\caption{SCL Decoding complexity comparison of the decoding schemes}
	\vspace{-2mm}
	\centering
	\renewcommand{\arraystretch}{1.1}
	\label{complexity}
	\footnotesize
  \begin{tabular}{cccc}
  \toprule
  $(N, K)$            & Scheme      & $\mathbb{F}_{2^5}$ oper. & FLOPs \\ 
  \midrule
  \multirow{3}{*}{(32, 7)}& SCL(64)  & $5.23\times 10^3$ & $5.77\times 10^4$    \\
                      & Chase-BM(8) & $9.73\times 10^5$ &  $1.30\times 10^4$  \\
                       & KV(6)       & $6.36\times 10^5$ & $1.05\times 10^5$  \\ 
  \midrule
  \multirow{3}{*}{(32, 15)} & SCL(64)     &  $1.20\times 10^{4}$ &   $8.90\times 10^{4}$    \\
                      & Chase-BM(8) & $8.07\times 10^{5}$  &   $1.15\times 10^{4}$    \\
                      & KV(6)       &  $2.05\times 10^{6}$ &   $1.59\times 10^{5}$    \\
  \midrule
  \multirow{3}{*}{(32, 25)}& SCL(64)  &   $5.35\times 10^3$ & $8.05\times 10^4$\\
                      & Chase-BM(8) & $1.72\times 10^{5} $ &  $8.77\times 10^3$ \\
                       & KV(6)       & $3.05\times 10^6$ & $2.07\times 10^5$ \\ 
  \bottomrule
  \end{tabular}\label{complexity of three schemes}
  \vspace{-3mm}
\end{table}

Fig. \ref{multi_algorithms_performance} compares the SCL decoding frame error rate (FER) performance of length-32 eRS codes with rates 0.22, 0.34, 0.47, 0.66 and 0.78. They are the (32, 7), (32, 11), (32, 15), (32, 21) and (32, 25) eRS codes, respectively. The decoding FERs are obtained at the SNR of 6 dB. For the SCL decoding, this paper adopts the permutation matrix proposed by \cite{peter_SC_4_RS_codes, binary_transformation}. It produces an information symbols distribution achieving the optimal distribution specified in Case \Rmnum{1} of Theorem 3, which is the best attainable result so far. Meanwhile, Table \ref{complexity of three schemes} shows decoding complexity of these algorithms under the specified parameters. Together with Fig. \ref{multi_algorithms_performance}, they show that the SCL decoding can outperform both the KV and Chase-BM decoding with fewer finite field arithmetic operations. However, compared with Chase-BM decoding, SCL decoding requires more floating point operations (FLOPs) due to its soft decoding nature. When decoding the (32, 15) eRS codes, the SCL\,(64) decoding slightly outperforms the KV\,(6) decoding while reduces both the number of $\mathbb{F}_{2^5}$ operations and FLOPs. Compared with the Chase-BM\,(8) decoding, the SCL\,(64) decoding achieves better FER performance at the cost of increased FLOPs. However, this research shows the SCL decoding competency so far only exhibits for short-to-medium length eRS codes. As the codeword length increases, the amount of \textit{apriori} information provided for the decoding becomes less significant for the decoding to thrive.

\bibliographystyle{IEEEtranetal}
\bibliography{MyLibrary}

@Article{arikan2009polar,
	author  = {E. Arikan},
	journal = {IEEE Trans. Inf. Theory},
	title   = {Channel polarization: A method for constructing capacity-achieving codes for symmetric binary-input memoryless channels},
	year    = {2009},
	number  = {7},
	pages   = {3051-3073},
	volume  = {55},
}

@article{SCL_NonBinary,
  title={{SCL} Decoding of Non-Binary Linear Block Codes},
  author={J. Lin and L. Chen and X. Ye},
  journal={arXiv preprint arXiv:2511.11256},
  year={2025}
}

@article{binary_transformation,
    author={Lin, Chien-Ying and Huang, Yu-Chih and Shieh, Shin-Lin and Chen, Po-Ning},
    journal={IEEE Open J. Commun. Soc.}, 
    title={Transformation of Binary Linear Block Codes to Polar Codes With Dynamic Frozen}, 
    year={2020},
    volume={1},
    number={},
    pages={333-341},
    doi={10.1109/OJCOMS.2020.2979529}
}

@inproceedings{trifonov_dynamic,
  author={P. Trifonov and V. Miloslavskaya},
	booktitle={Proc. IEEE Inf. Theory Workshop (ITW)}, 
	title={Polar codes with dynamic frozen symbols and their decoding by directed search}, 
	year={2013},
  month = {Sep.},
  address = {Seville, Spain},
	volume={},
	number={},
	pages={1-5},
	doi={10.1109/ITW.2013.6691213}
}

@article{eRS_MDS,
  title={On {MDS} extensions of generalized {R}eed-{S}olomon codes},
  author={G. Seroussi and R.M. Roth},
  journal={IEEE Trans. Inf. Theory},
  volume={32},
  number={3},
  pages={349--354},
  year={2003},
  publisher={IEEE}
}

@article{GA_error_prob,
  author={P. Trifonov},
	journal={IEEE Trans. Commun.},
	title={Efficient Design and Decoding of Polar Codes},
	year={2012},
	volume={60},
	number={11},
	pages={3221-3227},
	doi={10.1109/TCOMM.2012.081512.110872},
	ISSN={1558-0857},
	month={Nov.},
}

@article{polar_subcodes,
  title={Polar subcodes},
  author={P. Trifonov and V. Miloslavskaya},
  journal={IEEE J. Sel. Areas Commun.},
  volume={34},
  number={2},
  pages={254--266},
  year={2015},
  publisher={IEEE}
}

@article{channel_degradation,
  title={Performance of polar codes with the construction using density evolution},
  author={R. Mori and T. Tanaka},
  journal={IEEE Commun. Lett.},
  volume={13},
  number={7},
  pages={519--521},
  year={2009},
  publisher={IEEE}
}

@article{LLRs_update,
  title={{LLR}-based successive cancellation list decoding of polar codes},
  author={A. Balatsoukas-Stimming and M. {Bastani Parizi} and A. Burg},
  journal={IEEE Trans. Signal Process.},
  volume={63},
  number={19},
  pages={5165--5179},
  year={2015},
  publisher={IEEE}
}

@ARTICLE{SCL_decoding,
  title={List decoding of polar codes},
	author={{Tal}, Ido and {Vardy}, Alexander},
	journal={{IEEE} Trans. Inf. Theory},
	volume={61},
	number={5},
	pages={2213--2226},
	year={2015},
	publisher={IEEE}
}

@inproceedings{peter_SC_4_RS_codes,
  author={P. Trifonov},
  booktitle ={Proc. IEEE Inf. Theory Workshop (ITW)}, 
  title={Successive cancellation permutation decoding of {R}eed-{S}olomon codes}, 
  year={2014},
  month = {Nov.},
  address = {Hobart, Australia},
  volume={},
  number={},
  pages={386-390}
}

@book{BM_algorithm,
  title={Algebraic Coding Theory},
  author={E. R. Berlekamp},
  year={1968},
  publisher={New York: McGraw-Hill}
}

@ARTICLE{GS_algorithm,
  author={V. Guruswami. and M. Sudan},
  journal={IEEE Trans. Inf. Theory}, 
  title={Improved decoding of {R}eed-{S}olomon and algebraic-geometry codes}, 
  year={1999},
  volume={45},
  number={6},
  pages={1757-1767},
}

@article{shift_register_BCH,
  author={J. Massey},
  title={Shift-register synthesis and {BCH} decoding},
  journal={IEEE Trans. Inf. Theory},
  volume={15},
  number={1},
  year = {2003},
  pages={122--127},
  publisher={IEEE},
}

@article{KV_algorithm,
  title={Algebraic soft-decision decoding of {R}eed-{S}olomon codes},
  author={R. Koetter and A. Vardy},
  journal={IEEE Trans. Inf. Theory},
  volume={49},
  number={11},
  pages={2809--2825},
  year={2003},
  publisher={IEEE}
}

@article{KV_compliexity,
  title={Progressive algebraic soft-decision decoding of {R}eed-{S}olomon codes using module minimization},
  author={J. Xing and L. Chen and M. Bossert},
  journal={IEEE Trans. Commun.},
  volume={67},
  number={11},
  pages={7379--7391},
  year={2019},
  publisher={IEEE}
}

@article{chase,
  title={Class of algorithms for decoding block codes with channel measurement information},
  author={D. Chase},
  journal={IEEE Trans. Inf. Theory},
  volume={18},
  number={1},
  pages={170--182},
  year={2003},
  publisher={IEEE}
}

@book{eRS_MDS_book,
  title={Algebraic Codes for Data Transmission},
  author={Blahut, Richard E},
  year={2003},
  publisher={Cambridge University Press}
}
\end{document}